\newcommand{\figuresize}{\small}
\newcommand{\U}{\Upsilon}
\newcommand{\Ua}{\Upsilon_\alpha}
\newcommand{\Ub}{\Upsilon_\beta}
\newcommand{\dN}{\mathsf{N}}
\newcommand{\dS}{\mathsf{S}}
\newcommand{\dE}{\mathsf{E}}
\newcommand{\dW}{\mathsf{W}}
\newcommand{\dall}{\{\dN,\dS,\dE,\dW\}}
\newcommand{\ignore}[1]{}
\newcommand{\vv}{\vec{v}}
\newcommand{\hUPV}{{\normalfont \textsf{hUPV}_1}}
\newcommand{\sUPV}{{\normalfont \textsf{sUPV}_1}}
\begin{document}

\title{Producibility in hierarchical self-assembly}

\author{
    David Doty\thanks{California Institute of Technology, Pasadena, CA, USA, {\tt ddoty@caltech.edu}. The author was supported by the Molecular Programming Project under NSF grant 0832824 and by NSF grants CCF-1219274 and CCF-1162589, and by a Computing Innovation Fellowship under NSF grant 1019343.}
}
\date{}

\maketitle
\begin{abstract}
  Three results are shown on producibility in the hierarchical model of tile self-assembly.

  It is shown that a simple greedy polynomial-time strategy decides whether an assembly $\alpha$ is producible.
  The algorithm can be optimized to use $O(|\alpha| \log^2 |\alpha|)$ time.

  Cannon, Demaine, Demaine, Eisenstat, Patitz, Schweller, Summers, and Winslow~\cite{TwoHandsBetterThanOne} showed that the problem of deciding if an assembly $\alpha$ is the unique producible terminal assembly of a tile system $\calT$ can be solved in $O(|\alpha|^2 |\calT| + |\alpha| |\calT|^2)$ time for the special case of noncooperative ``temperature 1'' systems.
  It is shown that this can be improved to $O(|\alpha| |\calT| \log |\calT|)$ time.

  Finally, it is shown that if two assemblies are producible, and if they can be overlapped consistently -- i.e., if the positions that they share have the same tile type in each assembly -- then their union is also producible.
\end{abstract}


\thispagestyle{empty}\newpage\setcounter{page}{1}
\vspace{-1.1cm}
\section{Introduction}
\label{sec-intro}

\subsection{Background of the field}
Winfree's abstract Tile Assembly Model (aTAM)~\cite{Winfree98simulationsof} is a model of crystal growth through cooperative binding of square-like monomers called \emph{tiles}, implemented experimentally (for the current time) by DNA~\cite{WinLiuWenSee98, BarSchRotWin09}.
In particular, it models the potentially algorithmic capabilities of tiles that can be designed to bind if and only if the total strength of attachment (summed over all binding sites, called \emph{glues} on the tile) is at least a parameter $\tau$, sometimes called the \emph{temperature}.
In particular, when the glue strengths are integers and $\tau=2$, this implies that two strength 1 glues must cooperate to bind the tile to a growing assembly.
Two assumptions are key: 1) growth starts from a single specially designated \emph{seed} tile type, and 2) only individual tiles bind to an assembly, never larger assemblies consisting of more than one tile type.
We will refer to this model as the \emph{seeded aTAM}.
While violations of these properties are often viewed as errors in implementation of the seeded aTAM~\cite{SchWin07, SchWin09}, relaxing the assumptions results in a different model with its own programmable abilities.
In the \emph{hierarchical} (a.k.a. \emph{multiple tile}~\cite{AGKS05}, \emph{polyomino}~\cite{Winfree06, Luhrs10}, \emph{two-handed}~\cite{TwoHandsBetterThanOne, DotPatReiSchSum10, TwoHandedNotIntrinsicallyUniversalconf}) \emph{aTAM}, there is no seed tile, and an assembly is considered producible so long as two producible assemblies are able to attach to each other with strength at least $\tau$, with all individual tiles being considered as ``base case'' producible assemblies.
In either model, an assembly is considered \emph{terminal} if nothing can attach to it; viewing self-assembly as a computation, terminal assembly(ies) are often interpreted to be the output.
See~\cite{DotCACM,PatitzSurvey12} for an introduction to recent work in these models.

The hierarchical aTAM has attracted considerable recent attention.
It is $\coNP$-complete to decide whether an assembly is the unique terminal assembly produced by a hierarchical tile system~\cite{TwoHandsBetterThanOne}.
There are infinite shapes that can be assembled in the hierarchical aTAM but not the seeded aTAM, and vice versa, and there are finite shapes requiring strictly more tile types to assemble in the seeded aTAM than the hierarchical aTAM, and vice versa~\cite{TwoHandsBetterThanOne}.
Despite this incomparability between the models for exact assembly of shapes, with a small blowup in scale, any seeded tile system can be simulated by a hierarchical tile system~\cite{TwoHandsBetterThanOne}, improving upon an earlier scheme that worked for restricted classes of seeded tile systems~\cite{Luhrs10}.
However, the hierarchical aTAM is not able to simulate itself from a single set of tile types, i.e., it is not \emph{intrinsically universal}~\cite{TwoHandedNotIntrinsicallyUniversalconf}, unlike the seeded aTAM~\cite{TAMIU}.
It is possible to assemble an $n \times n$ square in a hierarchical tile system with $O(\log n)$ tile types that exhibits a very strong form of fault-tolerance in the face of spurious growth via strength 1 bonds~\cite{DotPatReiSchSum10}.
The parallelism of the hierarchical aTAM suggests the possibility that it can assemble shapes faster than the seeded aTAM, but it cannot for a wide class of tile systems~\cite{CheDot12}.

Interesting variants of the hierarchical aTAM introduce other assumptions to the model.
The \emph{multiple tile} model retains a seed tile and places a bound on the size of assemblies attaching to it~\cite{AGKS05}.
Under this model, it is possible to modify a seeded tile system to be \emph{self-healing}, that is, it correctly regrows when parts of itself are removed, even if the attaching assemblies that refill the removed gaps are grown without the seed~\cite{Winfree06}.
The model of \emph{staged assembly} allows multiple test tubes to undergo independent growth, with excess incomplete assemblies washed away (e.g. purified based on size) and then mixed, with assemblies from each tube combining via hierarchical attachment~\cite{DDFIRSS07, DemEisIshWinOneDimStagedjournal}.
The \emph{RNase enzyme} model~\cite{RNaseSODA2010, DemPatSchSum2010RNase, PatSumIdentifyingJournal} assumes some tile types to be made of RNA, which can be digested by an enzyme called RNase, leaving only the DNA tiles remaining, and possibly disconnecting what was previously a single RNA/DNA assembly into multiple DNA assemblies that can combine via hierarchical attachment.
Introducing negative glue strengths into the hierarchical aTAM allows for ``fuel-efficient'' computation~\cite{SchSheSODAFuel}.

\subsection{Contributions of this paper}
We show three results related to producibility in the hierarchical aTAM.
\begin{enumerate}
  \item
  In the seeded aTAM, there is an obvious linear-time algorithm to test whether assembly $\alpha$ is producible by a tile system: starting from the seed, try to attach tiles until $\alpha$ is complete or no more attachments are possible.
  We show that in the hierarchical aTAM, a similar greedy strategy correctly identifies whether a given assembly is producible, though it is more involved to prove that it is correct.
  The idea is to start with all tiles in place as they appear in $\alpha$, but with no bonds, and then to greedily bind attachable assemblies until $\alpha$ is assembled.
  The algorithm can be optimized to use $O(|\alpha| \log^2 |\alpha|)$ time.
  It is not obvious that this works, since it is conceivable that assemblies must attach in a certain order for $\alpha$ to form, but the greedy strategy may pick another order and hit a dead-end in which no assemblies can attach.
  This is shown in Section~\ref{sec-algorithm}.

  \item
  We show that there is a faster algorithm for the temperature 1 Unique Production Verification (UPV) problem studied by Cannon, Demaine, Demaine, Eisenstat, Patitz, Schweller, Summers, and Winslow~\cite{TwoHandsBetterThanOne}, which is the problem of determining whether assembly $\alpha$ is the unique producible terminal assembly of tile system $\calT$, where $\calT$ has temperature 1, meaning that all positive glue strength are sufficiently strong to attach any two assemblies.
  They give an algorithm that runs in $O(|\alpha|^2 |\calT| + |\alpha| |\calT|^2)$ time.
  We show how to improve this running time.
  Cannon et al.~proved their result by using an $O(|\alpha|^2 + |\alpha| |\calT|)$ time algorithm for UPV that works in the seeded aTAM~\cite{ACGHKMR02}, and then reduced the hierarchical temperature-1 UPV problem to $|\calT|$ instances of the seeded UPV problem.
  We improve this result by showing that a faster $O(|\alpha| \log |\calT|)$ time algorithm for the seeded UPV problem exists for the special case of temperature 1, and then we apply the technique of Cannon et al.~relating the hierarchical problem to the seeded problem to improve the running time of the hierarchical algorithm to $O(|\alpha| |\calT| \log |\calT|)$.
  This is shown in Section~\ref{sec-upv-t1}.

  \item
  We show that if two assemblies $\alpha$ and $\beta$ are producible in the hierarchical model, and if they can be overlapped consistently (i.e., if the positions that they share have the same tile type in each assembly), then their union $\alpha \cup \beta$ is producible.
  This is trivially true in the seeded model, but it requires more care to prove in the hierarchical model.
  It is conceivable \emph{a priori} that although $\beta$ is producible, $\beta$ must assemble $\alpha \cap \beta$ in some order that is inconsistent with how $\alpha$ assembles $\alpha \cap \beta$.
  This is shown in Section~\ref{sec-union}.
\end{enumerate}

\vspace{-0.2in}
\section{Informal definition of the abstract tile assembly model}
\label{sec-tam-informal}

This section gives a brief informal sketch of the seeded and hierarchical variants of the abstract Tile Assembly Model (aTAM).
See Section \ref{sec-tam-formal} for a formal definition of the aTAM.

A \emph{tile type} is a unit square with four sides, each consisting of a \emph{glue label} (often represented as a finite string) and a nonnegative integer \emph{strength}.
We assume a finite set $T$ of tile types, but an infinite number of copies of each tile type, each copy referred to as a \emph{tile}.
If a glue has strength 0, we say it is \emph{null}, and if a positive-strength glue facing some direction does not appear on some tile type in the opposite direction, we say it is \emph{functionally null}.
We assume that all tile sets in this paper contain no functionally null glues.
An \emph{assembly}
is a positioning of tiles on the integer lattice $\Z^2$; i.e., a partial function $\alpha:\Z^2 \dashrightarrow T$. 
We write $|\alpha|$ to denote $|\dom \alpha|$.
Write $\alpha \sqsubseteq \beta$ to denote that $\alpha$ is a \emph{subassembly} of $\beta$, which means that $\dom\alpha \subseteq \dom\beta$ and $\alpha(p)=\beta(p)$ for all points $p\in\dom\alpha$.
In this case, say that $\beta$ is a \emph{superassembly} of $\alpha$.
We abuse notation and take a tile type $t$ to be equivalent to the single-tile assembly containing only $t$ (at the origin if not otherwise specified).
Two adjacent tiles in an assembly \emph{interact} if the glue labels on their abutting sides are equal and have positive strength. 
Each assembly induces a \emph{binding graph}, a grid graph whose vertices are tiles, with an edge between two tiles if they interact.
The assembly is \emph{$\tau$-stable} if every cut of its binding graph has strength at least $\tau$, where the weight of an edge is the strength of the glue it represents.

A \emph{seeded tile assembly system} (seeded TAS) is a triple $\calT = (T,\sigma,\tau)$, where $T$ is a finite set of tile types, $\sigma:\Z^2 \dashrightarrow T$ is a finite, $\tau$-stable \emph{seed assembly}, 
and $\tau$ is the \emph{temperature}.
If $\calT$ has a single seed tile $s\in T$ (i.e., $\sigma(0,0)=s$ for some $s \in T$ and is undefined elsewhere), then we write $\calT=(T,s,\tau).$
Let $|\calT|$ denote $|T|$.
An assembly $\alpha$ is \emph{producible} if either $\alpha = \sigma$ or if $\beta$ is a producible assembly and $\alpha$ can be obtained from $\beta$ by the stable binding of a single tile.
In this case write $\beta\to_1 \alpha$ ($\alpha$ is producible from $\beta$ by the attachment of one tile), and write $\beta\to \alpha$ if $\beta \to_1^* \alpha$ ($\alpha$ is producible from $\beta$ by the attachment of zero or more tiles).
An assembly is \emph{terminal} if no tile can be $\tau$-stably attached to it.

A \emph{hierarchical tile assembly system} (hierarchical TAS) is a pair $\calT = (T,\tau)$, where $T$ is a finite set of tile types and $\tau \in \N$ is the temperature.
An assembly is \emph{producible} if either it is a single tile from $T$, or it is the $\tau$-stable result of translating two producible assemblies without overlap.
Therefore, if an assembly $\alpha$ is producible, then it is produced via an \emph{assembly tree}, a full binary tree whose root is labeled with $\alpha$, whose $|\alpha|$ leaves are labeled with tile types, and each internal node is a producible assembly formed by the stable attachment of its two child assemblies.
An assembly $\alpha$ is \emph{terminal} if for every producible assembly $\beta$, $\alpha$ and $\beta$ cannot be $\tau$-stably attached.
If $\alpha$ can grow into $\beta$ by the attachment of zero or more assemblies, then we write $\alpha \to \beta$.

Note that our definitions imply only finite assemblies are producible.
In either the seeded or hierarchical model, let $\prodasm{\calT}$ be the set of producible assemblies of $\calT$, and let $\termasm{\calT} \subseteq \prodasm{\calT}$ be the set of producible, terminal assemblies of $\calT$.
A TAS $\calT$ is \emph{directed} (a.k.a., \emph{deterministic}, \emph{confluent}) if $|\termasm{\calT}| = 1$.
If $\calT$ is directed with unique producible terminal assembly $\alpha$, we say that $\calT$ \emph{uniquely produces} $\alpha$.
It is easy to check that in the seeded aTAM, $\calT$ uniquely produces $\alpha$ if and only if every producible assembly $\beta \sqsubseteq \alpha$.
In the hierarchical model, a similar condition holds, although it is more complex since hierarchical assemblies, unlike seeded assemblies, do not have a ``canonical translation'' defined by the seed.
$\calT$ uniquely produces $\alpha$ if and only if for every producible assembly $\beta$, there is a translation $\beta'$ of $\beta$ such that $\beta' \sqsubseteq \alpha$.
In particular, if there is a producible assembly $\beta \neq \alpha$ such that $\dom \alpha = \dom \beta$, then $\alpha$ is not uniquely produced.
Since $\dom \beta = \dom \alpha$, every nonzero translation of $\beta$ has some tiled position outside of $\dom \alpha$, whence no such translation can be a subassembly of $\alpha$, implying $\alpha$ is not uniquely produced.


\vspace{-0.2in}
\section{Polynomial-time verification of production}
\label{sec-algorithm}

Let $S$ be a finite set.
A \emph{partition} of $S$ is a collection $\calC = \{ C_1, \ldots, C_k \} \subseteq \calP(S)$ such that $\bigcup_{i=1}^k C_i = S$ and for all $i \neq j$, $C_i \cap C_j = \emptyset$.
A \emph{hierarchical division} of $S$ is a full binary tree $\U$ (a tree in which every internal node has exactly two children) whose nodes represent subsets of $S$, such that the root of $\U$ represents $S$, the $|S|$ leaves of $\U$ represent the singleton sets $\{x\}$ for each $x \in S$, and each internal node has the property that its set is the (disjoint) union of its two childrens' sets.

\begin{lemma}\label{lem-partition-tree-partition}
Let $S$ be a finite set with $|S| \geq 2$.
Let $\U$ be any hierarchical division of $S$, and let $\calC$ be any partition of $S$ other than $\{S\}$.
Then there exist $C_1,C_2 \in \calC$ with $C_1 \neq C_2$, and there exist $C'_1 \subseteq C_1$ and $C'_2 \subseteq C_2$, such that $C'_1$ and $C'_2$ are siblings in $\U$.
\end{lemma}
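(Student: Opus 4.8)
The plan is to look inside the tree $\U$ at a single well-chosen node. Call a node $v$ of $\U$ \emph{monochromatic} if the set it represents is a subset of some one part $C \in \calC$, and \emph{polychromatic} otherwise. Two preliminary observations drive everything: (i) every leaf of $\U$ is monochromatic, since its set is a singleton $\{x\}$ and hence lies inside whichever part of $\calC$ contains $x$; and (ii) the root of $\U$ is polychromatic, since it represents $S$, and $S \subseteq C$ for some $C \in \calC$ would force $C = S$ and therefore $\calC = \{S\}$ (the parts being nonempty and pairwise disjoint), contrary to hypothesis. In particular the root is not a leaf, which is consistent with $|S| \ge 2$.

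Next I would single out a polychromatic node $v$ of maximal depth in $\U$; such a node exists by observation (ii). By observation (i), $v$ is not a leaf, so it has two children $v_1, v_2$, and $\mathrm{set}(v) = \mathrm{set}(v_1) \cup \mathrm{set}(v_2)$ (disjointly). Both children lie strictly deeper than $v$, so by maximality of $v$ each of $v_1, v_2$ is monochromatic: there are parts $C_1, C_2 \in \calC$ with $\mathrm{set}(v_1) \subseteq C_1$ and $\mathrm{set}(v_2) \subseteq C_2$. Now the key point: $C_1 \ne C_2$. Indeed, if $C_1 = C_2$ then $\mathrm{set}(v) = \mathrm{set}(v_1) \cup \mathrm{set}(v_2) \subseteq C_1$, making $v$ monochromatic --- a contradiction. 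Taking $C'_1 = \mathrm{set}(v_1) \subseteq C_1$ and $C'_2 = \mathrm{set}(v_2) \subseteq C_2$ then furnishes distinct parts $C_1, C_2$ and subsets $C'_1, C'_2$ that are siblings in $\U$, which is exactly what the lemma asks for.

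I do not anticipate a genuine obstacle; the argument is a short extremal-node selection. The only places demanding a bit of care are the two preliminary observations --- in particular getting the boundary case right, i.e.\ that $\calC \ne \{S\}$ is precisely what makes the root polychromatic (and that this uses nonemptiness of the parts) --- and the one-line contradiction showing that the two monochromatic siblings of a polychromatic parent cannot belong to the same part. An equivalent, more ``algorithmic'' way to organize the same proof is to descend from the root, always moving into a polychromatic child as long as one exists; since $\U$ is finite this descent must halt at a polychromatic node both of whose children are monochromatic, and then the same contradiction argument applies.
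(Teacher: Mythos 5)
Your proof is correct and is essentially the paper's argument in different clothing: your ``monochromatic/polychromatic'' distinction is exactly the paper's bottom-up labeling of nodes by parts of $\calC$, and selecting a deepest polychromatic node (or, as you note, descending from the root through polychromatic children) is the paper's walk from the unlabeled root to a node with two differently-labeled children. No gaps; the observation that the two children's parts must differ, else the parent would be monochromatic, matches the paper's stopping-condition argument.
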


\begin{proof}
First, label each leaf $\{x\}$ of $\U$ with the unique element $C_i \in \calC$ such that $x \in C_i$.
Next, iteratively label internal nodes according to the following rule: while there exist two children of a node $u$ that have the same label, assign that label to $u$.
Notice that this rule preserves the invariant that each labeled node $u$ (representing a subset of $S$) is a subset of the set its label represents.
Continue until no node has two identically-labeled children.
$\calC$ contains only proper subsets of $S$, so the root (which is the set $S$) cannot be contained in any of them, implying the root will remain unlabeled.
Follow any path starting at the root, always following an unlabeled child, until both children of the current internal node are labeled.
(The path may vacuously end at the root.)
Such a node is well-defined since at least all leaves are labeled.
By the stopping condition stated previously, these children must be labeled differently.
The children are the witnesses $C'_1$ and $C'_2$, with their labels having the values $C_1$ and $C_2$, testifying to the truth of the lemma.
\end{proof}

Lemma \ref{lem-partition-tree-partition} will be useful when we view $\Upsilon$ as an assembly tree for some producible assembly $\alpha$, and we view $\calC$ as a partially completed attempt to construct another assembly tree for $\alpha$, where each element of $\calC$ is a subassembly that has been produced so far.

When we say ``by monotonicity'', this refers to the fact that glue strengths are nonnegative, which implies that if two assemblies $\alpha$ and $\beta$ can attach, the addition of more tiles to either $\alpha$ or $\beta$ cannot \emph{prevent} this binding, so long as the additional tiles do not overlap the other assembly.

\begin{algorithm}
\caption{\textsc{Is-Producible-Assembly}$(\alpha,\tau)$}
\label{alg-slow}
\begin{algorithmic}[1]
  \STATE {\bf input:} assembly $\alpha$ and temperature $\tau$
  \STATE $\calC \leftarrow \setr{ \{v\} }{ v \in \dom \alpha }$\ \ \ // \emph{(positions defining) subassemblies of $\alpha$; initially individual (positions of) tiles}
  \WHILE{$|\calC| > 1$} \label{start-while}
    \IF{there exist $C_i,C_j\in\calC$ with glues between $C_i$ and $C_j$ of total strength at least $\tau$} \label{get-glues}
      \STATE $\calC \leftarrow (\calC \setminus \{C_i,C_j\}) \cup \{ C_i \cup C_j \}$ \label{modify-C}
    \ELSE
      \PRINT{``$\alpha$ is not producible''} and {\bf exit}
    \ENDIF
  \ENDWHILE
  \PRINT{``$\alpha$ is producible''}
\end{algorithmic}
\end{algorithm}

We want to solve the following problem: given an assembly $\alpha$ and temperature $\tau$, is $\alpha$ producible in the hierarchical aTAM at temperature $\tau$?\footnote{We do not need to give the tile set $T$ as input because the tiles in $\alpha$ implicitly define a tile set, and the presence of extra tile types in $T$ that do not appear in $\alpha$ cannot affect its producibility.}
The algorithm \textsc{Is-Producible-Assembly} (Algorithm~\ref{alg-slow}) solves this problem.

\begin{theorem}
  There is an $O(|\alpha| \log^2 |\alpha|)$ algorithm deciding whether an assembly $\alpha$ is producible at temperature $\tau$ in the hierarchical aTAM.
\end{theorem}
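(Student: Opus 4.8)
The plan is to prove the theorem in two stages: (i) show that the greedy strategy of Algorithm~\ref{alg-slow}, \textsc{Is-Producible-Assembly}, correctly decides producibility, and (ii) show it admits an implementation running in $O(|\alpha|\log^2|\alpha|)$ time. Throughout I would dispatch $|\alpha|\le 1$ as a trivial base case (a single tile is producible) and assume $|\alpha|\ge 2$.

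\emph{Soundness} (if the algorithm reports ``producible'' then $\alpha$ is producible). I would maintain the invariant that every $C\in\calC$ is the domain of a producible subassembly $\alpha|_C$ of $\alpha$; this holds at initialization since each singleton names a single tile, a ``base case'' producible assembly. For the merge of $C_i,C_j$: the assemblies $\alpha|_{C_i}$ and $\alpha|_{C_j}$ are producible, hence $\tau$-stable, they occupy disjoint domains, and the glues between them in $\alpha$ total at least $\tau$. I claim $\alpha|_{C_i\cup C_j}$ is $\tau$-stable, whence (being the no-overlap combination of two producible assemblies) it is producible. For the claim, take any cut $(X,Y)$ of its binding graph: if $X$ and $Y$ both meet $C_i$, then $(X\cap C_i,\,Y\cap C_i)$ is a cut of the $\tau$-stable $\alpha|_{C_i}$, so the edges it severs---a subset of those severed by $(X,Y)$---already have strength $\ge\tau$; symmetrically if both meet $C_j$; and otherwise necessarily $\{X,Y\}=\{C_i,C_j\}$, so $(X,Y)$ severs exactly the $C_i$--$C_j$ interface, of strength $\ge\tau$. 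When the loop halts with $\calC=\{\dom\alpha\}$, the invariant yields producibility of $\alpha=\alpha|_{\dom\alpha}$.

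\emph{Completeness} (if $\alpha$ is producible then the algorithm reports ``producible''). Termination is immediate since each iteration decreases $|\calC|$, so it suffices to rule out the algorithm deadlocking with $|\calC|>1$ and no mergeable pair---this is the main obstacle, and it is exactly what Lemma~\ref{lem-partition-tree-partition} is for. Fix an assembly tree $\U$ witnessing $\alpha$'s producibility and regard it as a hierarchical division of $\dom\alpha$: label each node by the set of positions its subassembly occupies inside $\alpha$ (root $=\dom\alpha$, leaves the singletons; the two children of a node partition its set because they attach without overlap). Whenever a node's children have label-sets $P$ and $Q$, the assembly $\alpha|_{P\cup Q}$ is $\tau$-stable (it labels that node), so its cut $(P,Q)$ has strength $\ge\tau$; equivalently, the $P$--$Q$ glues in $\alpha$ total $\ge\tau$. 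Now suppose the algorithm sits at a partition $\calC\ne\{\dom\alpha\}$. Lemma~\ref{lem-partition-tree-partition} furnishes distinct $C_1,C_2\in\calC$ and $C'_1\subseteq C_1$, $C'_2\subseteq C_2$ that are siblings in $\U$; then the $C'_1$--$C'_2$ glues total $\ge\tau$, and since (by monotonicity, as $C_1\cap C_2=\emptyset$) these form a subset of the $C_1$--$C_2$ glues, the latter also total $\ge\tau$. So a merge is always available and the algorithm cannot deadlock.

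\emph{Running time.} I would run the greedy loop on top of a union--find structure over $\dom\alpha$ (its classes being the members of $\calC$), where each current part additionally carries a balanced binary search tree mapping each adjacent part to the total strength of the glues between them, and where a worklist holds pairs of parts currently known to have interface strength $\ge\tau$; note $\alpha$'s binding graph has at most $2|\alpha|$ edges. To fetch a merge, pop a pair from the worklist, replace its members by their representatives, discard if equal, else merge (the interface is still $\ge\tau$ by monotonicity). To merge two parts, use weight-balanced (``small-to-large'') union: walk the boundary edges of the lighter part, for each one updating the heavier part's BST entry for the affected neighbor and pushing that pair onto the worklist if its strength reaches $\tau$; this keeps the worklist containing a representative of every mergeable pair, since interface strengths only increase. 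A given binding-graph edge lies on the lighter side in only $O(\log|\alpha|)$ merges (that side's size at least doubles each time), each such occurrence costing one $O(\log|\alpha|)$ BST operation, so over all $O(|\alpha|)$ edges the total is $O(|\alpha|\log^2|\alpha|)$; the union--find, worklist bookkeeping, and an initial $O(|\alpha|\log|\alpha|)$ graph construction are lower order. Correctness of this implementation is exactly that of Algorithm~\ref{alg-slow} established above (an empty worklist is equivalent to having no mergeable pair, hence, by soundness and completeness, to $\calC=\{\dom\alpha\}$ iff $\alpha$ is producible). The essential difficulty is the completeness step via Lemma~\ref{lem-partition-tree-partition}---that the greedy strategy cannot paint itself into a corner; the data-structural part is routine once the greedy algorithm is known to be correct.
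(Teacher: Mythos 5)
Your proposal is correct and follows essentially the same route as the paper: the same greedy merging algorithm, the same appeal to Lemma~\ref{lem-partition-tree-partition} plus monotonicity to show the greedy strategy cannot deadlock, and the same small-to-large amortization giving $O(|\alpha|\log|\alpha|)$ edge updates at $O(\log|\alpha|)$ each. The only (immaterial) difference is bookkeeping: you maintain a worklist of pairs whose interface strength has reached $\tau$, whereas the paper keeps all adjacent pairs in a max-heap keyed by interface strength and repeatedly extracts the maximum.
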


\begin{proof}
\noindent {\bf Correctness:}
\textsc{Is-Producible-Assembly} works by building up the initially edge-free graph with the tiles of $\alpha$ as its nodes (the algorithm stores the nodes as points in $\Z^2$, but $\alpha$ would be used in step \ref{get-glues} to get the glues and strengths between tiles at adjacent positions), stopping when the graph becomes connected.
The order in which connected components (implicitly representing assemblies) are removed from and added to $\calC$ implicitly defines a particular assembly tree with $\alpha$ at the root (for every $C_1,C_2$ processed in line \ref{modify-C}, the assembly $\alpha \upharpoonright (C_1 \cup C_2)$ is a parent of $\alpha \upharpoonright C_1$ and $\alpha \upharpoonright C_2$ in the assembly tree).
Therefore, if the algorithm reports that $\alpha$ is producible, then it is.
Conversely, suppose that $\alpha$ is producible via assembly tree $\Upsilon$. 
Let $\calC = \{C_1,\ldots,C_k\}$ be the set of assemblies at some iteration of the loop at line \ref{start-while}.
It suffices to show that some pair of assemblies $C_i$ and $C_j$ are connected by glues with strength at least $\tau$.
By Lemma \ref{lem-partition-tree-partition}, there exist $C_i$ and $C_j$ with subsets $C'_i \subseteq C_i$ and $C'_j \subseteq C_j$ such that $C'_i$ and $C'_j$ are sibling nodes in $\Upsilon$.
Because they are siblings, the glues between $C'_i$ and $C'_j$ have strength at least $\tau$.
By monotonicity these glues suffice to bind $C_i$ to $C_j$,
so \textsc{Is-Producible-Assembly} is correct.

\noindent{\bf Running time:} Let $n=|\alpha|$.
The running time of the \textsc{Is-Producible-Assembly} (Algorithm~\ref{alg-slow}) is polynomial in $n$, but the algorithm can be optimized to improve the running time to $O(n \log^2 n)$ by careful choice of data structures.
\textsc{Is-Producible-Assembly-Fast} (Algorithm~\ref{alg-fast}) shows pseudo-code for this optimized implementation, which we now describe.
Let $n = |\alpha|$. 
Instead of searching over all pairs of assemblies, only search those pairs of assemblies that are adjacent.
This number is $O(n)$ since a grid graph has degree at most 4 (hence $O(n)$ edges) and the number of edges in the full grid graph of $\alpha$ is an upper bound on the number of adjacent assemblies at any time.
This can be encoded in a dynamically changing graph $G_c$ whose nodes are the current set of assemblies and whose edges connect those assemblies that are adjacent.

\begin{algorithm}
\caption{\textsc{Is-Producible-Assembly-Fast}$(\alpha,\tau)$}
\label{alg-fast}
\begin{algorithmic}[1]
  \STATE {\bf input:} assembly $\alpha$ and temperature $\tau$
  \STATE $V_c \leftarrow \setr{ \{v\} }{ v \in \dom \alpha }$\ \ \ // \emph{(positions defining) subassemblies of $\alpha$; initially individual (positions of) tiles}
  \STATE $E_c \leftarrow \{\{\{u\},\{v\}\} \ |\ \text{$\{u\} \in V_c$ and $\{v\} \in V_c$ and $u$ and $v$ are adjacent and interact} \}$
  \STATE $H \leftarrow $ empty heap
  \FORALL{$\{\{u\},\{v\}\} \in E_c$} \label{outer-for}
    \STATE $w(\{u\},\{v\}) \leftarrow $ strength of glue binding $\alpha(u)$ and $\alpha(v)$
    \STATE insert($\{\{u\},\{v\}\}, H)$
  \ENDFOR
  \WHILE{$|V_c| > 1$} \label{outer-while}
    \STATE $\{C_1,C_2\} \leftarrow $ remove-max($H$)\ \ \ // \emph{assume $|C_1| \geq |C_2|$ without loss of generality}
    \label{repeat-body}
    \IF{$w(C_1,C_2) < \tau$}
      \PRINT{``$\alpha$ is not producible''}  and {\bf exit}
    \ENDIF
    \STATE remove $C_2$ from $V_c$
    \FORALL{neighbors $C$ of $C_2$} \label{inner-for}
      \STATE remove $\{C_2,C\}$ from $E_c$ and $H$ \label{remove}
      \IF{$\{C_1,C\} \in E_c$} \label{check-edge}
        \STATE $w(C_1,C) \leftarrow w(C_1,C) + w(C_2,C)$
        \STATE increase-key($\{C_1,C\},H)$\ \ \ // \emph{update edge $\{C_1,C\}$ with new weight}
        \label{increase-key}
      \ELSE
        \STATE $w(C_1,C) \leftarrow w(C_2,C)$
        \STATE add $\{C_1,C\}$ to $E_c$ and $H$
      \ENDIF
    \ENDFOR
  \ENDWHILE
  \PRINT{``$\alpha$ is producible''}
\end{algorithmic}
\end{algorithm}

Each edge of $G_c$ stores the total glue strength between the assemblies.
Whenever two assemblies $C_1$ and $C_2$, with $|C_1| \geq |C_2|$ without loss of generality, are combined to form a new assembly, $G_c$ is updated by removing $C_2$, merging its edges with those of $C_1$, and for any edges they already share (i.e., the neighbor on the other end of the edge is the same), summing the strengths on the edges.
Each update of an edge (adding it to $C_1$, or finding it in $C_1$ to update its strength) can be done in $O(\log n)$ time using a tree set data structure to store neighbors for each assembly.

We claim that the total number of such updates of all edges is $O(n \log n)$ over all time, or amortized $O(\log n)$ updates per iteration of the outer loop.
To see why, observe that the number of edges an assembly has is at most linear in its size, so the number of new edges that must be added to $C_1$, or existing edges in $C_1$ whose strengths must be updated, is at most (within a constant) the size of the smaller component $C_2$.
The total number of edge updates is then, if $\Upsilon$ is the assembly tree discovered by the algorithm, $\sum_{\text{nodes } u \in \Upsilon} \min\{|\text{left}(u)|,|\text{right}(u)|\}$, where $|\text{left}(u)|$ and $|\text{right}(u)|$ respectively refer to the number of leaves of $u$'s left and right subtrees.
For a given number $n$ of leaves, this sum is maximized with a balanced tree, and in that case (summing over all levels of the tree) is $\sum_{i=0}^{\log n} 2^i (n / 2^i) = O(n \log n)$.
So the total time to update all edges is $O(n \log^2 n)$.

As for actually finding $C_1$ and $C_2$, each iteration of the outer loop, we can just look for the pair of adjacent assemblies with the largest connection strength.
So store the edges in a heap and we can simply grab the strongest edge off the top and update the heap by updating the keys containing $C_1$ whose connection strength changed and removing those containing $C_2$ but not $C_1$.
The edges whose connection strength changed correspond to precisely those neighbors that $C_1$ and $C_2$ shared before being merged.
Therefore $|C_2|$ is an upper bound on the number of edge updates required.
Thus the amortized number of heap updates is $O(\log n)$ per iteration of the outer loop by the same argument as above.
Thus it takes amortized time $O(\log^2 n)$ per iteration if each heap operation is $O(\log n)$.
Therefore (this non-naive implementation of) the algorithm takes $O(n \log^2 n)$ time.

The algorithm \textsc{Is-Producible-Assembly-Fast} (Algorithm~\ref{alg-fast}) implements this optimized idea.
The terminology for heap operations is taken from~\cite{CLRS01}.
Note that the way we remove $C_1$ and $C_2$ and add their union is to simply delete $C_2$ and then update $C_1$ to contain $C_2$'s edges.
The graph $G_c$ discussed above is $G_c = (V_c,E_c)$ where $V_c$ and $E_c$ are variables in \textsc{Is-Producible-Assembly-Fast}.
The weight function $w$ is used by the heap $H$ to order its elements.

Summarizing the analysis, each data structure operation takes time $O(\log n)$ with appropriate choice of a backing data structure.
The two outer loops (lines~\ref{outer-for} and~\ref{outer-while}) take $O(n)$ iterations.
The inner loop (line~\ref{inner-for}) runs for amortized $O(\log n)$ iterations, and its body executes a constant number of $O(\log n)$ time operations.
Therefore the total running time is $O(n \log^2 n)$.

\end{proof}

\vspace{-0.2in}
\section{Linear-time verification of temperature 1 unique production}
\label{sec-upv-t1}

This section shows that there is an algorithm, faster than the previous known algorithm~\cite{TwoHandsBetterThanOne}, that solves the temperature 1 \emph{unique producibility verification} (UPV) problem: given an assembly $\alpha$ and a temperature-1 hierarchical tile system $\calT$, decide if $\alpha$ is the unique producible, terminal assembly of $\calT$.
This is done by showing an algorithm for the temperature 1 UPV problem in the seeded model (which is faster than the general-temperature algorithm of~\cite{ACGHKMR02}), and then applying the technique of~\cite{TwoHandsBetterThanOne} relating producibility and terminality in the temperature 1 seeded and hierarchical models.


Define the decision problems $\sUPV$ and $\hUPV$ by the language $\setr{(\calT,\alpha)}{\termasm{\calT} = \{\alpha\}},$ where $\calT$ is a temperature 1 seeded TAS in the former case and a temperature 1 hierarchical TAS in the latter case.
To simplify the time analysis we assume $|\calT| = O(|\alpha|)$.


The following is the only result in this paper on the seeded aTAM.

\begin{thm}\label{thm-alg-temp1-seeded}
  There is an algorithm that solves the $\sUPV$ problem in time $O(|\alpha| \log |\calT|)$.
\end{thm}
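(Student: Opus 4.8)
The plan is to reduce $\sUPV$ to a handful of linear-time computations on the \emph{interaction graph} $G_\alpha$ of $\alpha$ --- the graph on $\dom\alpha$ with an edge between two adjacent positions exactly when their abutting glues are equal and of positive strength, which at $\tau=1$ coincides with the binding graph of $\alpha$ --- together with $O(|\alpha|)$ lookups into a dictionary built from $\calT$. The first step is to establish the characterization: $\termasm{\calT}=\{\alpha\}$ if and only if (i)~$\sigma\sqsubseteq\alpha$ and every position of $\dom\alpha$ is reachable in $G_\alpha$ from $\dom\sigma$ (equivalently, $\alpha$ is producible, since at $\tau=1$ one may always build $\alpha$ itself in breadth-first order starting from $\dom\sigma$); (ii)~for every $p\in\dom\alpha$ and every position $q$ adjacent to $p$ with $q\notin\dom\alpha$, the glue of $\alpha(p)$ facing $q$ is null (equivalently, using the no-functionally-null-glue assumption, $\alpha$ is terminal); and (iii)~for every $q\in\dom\alpha\setminus\dom\sigma$ and every $p$ adjacent to $q$ such that $p$ lies in the same connected component as $\dom\sigma$ in $G_\alpha-q$ and $\alpha(p)$ presents a positive-strength glue $g$ toward $q$, the only tile type of $\calT$ bearing glue $g$ on the side facing $p$ is $\alpha(q)$.

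For the ``only if'' direction, if (iii) fails via a tile $t'\neq\alpha(q)$ carrying glue $g$ on the side facing $p$, then since $p$ is reachable from $\dom\sigma$ in $G_\alpha$ while avoiding $q$ one can assemble the restriction of $\alpha$ to that component in breadth-first order (each attachment is legal at $\tau=1$, having an interacting neighbor) and then attach $t'$ at $q$; this yields a producible assembly that is not a subassembly of $\alpha$, which by the characterization ``$\calT$ uniquely produces $\alpha$ iff every producible assembly is a subassembly of $\alpha$'' recalled in Section~\ref{sec-tam-informal} rules out unique production; failure of (i) or (ii) is handled similarly. For the ``if'' direction, one inducts along an attachment sequence $\sigma=\beta_0\to_1\beta_1\to_1\cdots$: if $\beta_j\sqsubseteq\alpha$ and a tile attaches at $q$ via an interacting neighbor $p\in\dom\beta_j$, then $p$ is connected to $\dom\sigma$ within the binding graph of $\beta_j$, which sits inside $G_\alpha-q$, so (ii) forces $q\in\dom\alpha$ and (iii) forces the attached tile to equal $\alpha(q)$; hence $\beta_{j+1}\sqsubseteq\alpha$.

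The algorithm then proceeds as follows. Sort the tile types by (glue label, side) in $O(|\calT|\log|\calT|)$ time to build a balanced search tree that, given a glue label $g$ and a side $d$, reports in $O(\log|\calT|)$ time whether a unique tile type carries glue $g$ on side $d$ and, if so, which one; along the way, relabel the $O(|\calT|)$ distinct glue labels by integers in $\oneTo{O(|\calT|)}$ so that subsequent glue comparisons cost $O(1)$. Next, check $\sigma\sqsubseteq\alpha$, build $G_\alpha$, and run a single depth-first search from $\dom\sigma$ to verify that all of $\dom\alpha$ is reachable; since $G_\alpha$ has $O(|\alpha|)$ vertices and edges this takes $O(|\alpha|)$ time. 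The same search also yields, via a standard linear-time computation (the block-cut tree of $G_\alpha$ rooted at $\dom\sigma$'s block, with an Euler-tour preprocessing), a data structure answering in $O(1)$, for any $q$ and any $p$ adjacent to $q$, whether $p$ lies in the connected component of $\dom\sigma$ in $G_\alpha-q$. Check (ii) by scanning the $O(|\alpha|)$ boundary edges of $\alpha$. Finally check (iii): for each of the $O(|\alpha|)$ pairs $(q,p)$ with $q\in\dom\alpha\setminus\dom\sigma$, $p$ adjacent to $q$, $\alpha(p)$ presenting a positive-strength glue $g$ toward $q$, and $p$ passing the reachability test, look up $g$ in the search tree and verify that the unique tile type returned is $\alpha(q)$. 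Report $\termasm{\calT}=\{\alpha\}$ iff every check succeeds. Every step takes $O(|\alpha|)$ time except the dictionary construction, which is $O(|\calT|\log|\calT|)=O(|\alpha|\log|\calT|)$ using the assumption $|\calT|=O(|\alpha|)$, and the $O(|\alpha|)$ dictionary lookups, which total $O(|\alpha|\log|\calT|)$; the running time is therefore $O(|\alpha|\log|\calT|)$.

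I expect the main obstacle to be the proof of condition (iii), specifically justifying the restriction to neighbors $p$ that survive in $G_\alpha-q$. The tempting simpler condition --- that growth be forced from \emph{every} neighbor of $q$ --- is incorrect: a tile $t'$ sharing a glue with $\alpha(p)$ across a cut vertex $q$ can never actually be placed, because $p$ becomes reachable only after $q$ has already been filled by $\alpha(q)$, so this spurious possibility must be excluded, and the block-cut-tree test is exactly what does so correctly; getting the ``only if'' direction to go through is what pins down this condition. The remaining care-points --- a possibly multi-tile seed $\sigma$ (handled by using $\dom\sigma$ as the source set and excluding $q\in\dom\sigma$ from (iii)), degenerate cases such as $\alpha=\sigma$, and the distinction between $\alpha$ being non-producible and $\alpha$ being producible but not uniquely produced --- are all subsumed by the characterization above.
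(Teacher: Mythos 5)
Your proposal is correct and follows essentially the same route as the paper: both reduce unique production to checking, for each position, that its tile type is forced by exactly those neighbors reachable from the seed without passing through that position, both resolve this reachability question via the cut-vertex/biconnected-component structure of the binding graph rooted at the seed's block (Hopcroft--Tarjan), and both use an $O(\log|\calT|)$ glue dictionary for the per-position type checks. The only differences are cosmetic: you prove the forcing characterization directly rather than citing Adleman et al., and you answer the subtree-membership queries with an Euler-tour preprocessing rather than by weaving the checks into a second run of the depth-first search.
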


\begin{proof}
    Let $\calT=(T,s,1)$ and $\alpha$ be a instance of the $\sUPV$ problem.
    We first check that every tile in $\alpha$ appears in $T$, which can be done in time $O(|\alpha| \log |T|)$ by storing elements of $T$ in a data structure supporting $O(\log n)$ time access.
    In the seeded aTAM at temperature 1, $\alpha$ is producible if and only if it contains the seed $s$ and its binding graph is connected, which can be checked in time $O(|\alpha|)$.
    We must also verify that $\alpha$ is terminal, which is true if and only if all glues on unbound sides are null, checkable in time $O(|\alpha|)$.

    Once we have verified that $\alpha$ is producible and terminal, it remains to verify that $\calT$ uniquely produces $\alpha$. 
    Adleman, Cheng, Goel, Huang, Kempe, Moisset de Espan\'{e}s, and Rothemund~\cite{ACGHKMR02} showed that this is true (at any temperature) if and only if, for every position $p \in \dom\alpha$, if $\alpha_p \sqsubset \alpha$ is the maximal producible subassembly of $\alpha$ such that $p \not \in \dom \alpha_p$, then $\alpha(p)$ is the only tile type attachable to $\alpha_p$ at position $p$.
    They solve the problem by producing each such $\alpha_p$ and checking whether there is more than one tile type attachable to $\alpha_p$ at $p$.
    We use a similar approach, but we avoid producing each $\alpha_p$ by exploiting special properties of temperature 1 producibility.

    Given $p,q\in\dom\alpha$ such that $p \neq q$, write $p \prec q$ if, for every producible assembly $\beta$, $q \in \dom \beta \implies p \in \dom \beta$, i.e., the tile at position $p$ must be present before the tile at position $q$ can be attached.
    We must check each $p \in\dom \alpha$ and each position $q\in\dom\alpha$ adjacent to $p$ such that $p \not \prec q$ to see whether a tile type $t \neq \alpha(p)$ shares a positive-strength glue with $\alpha(q)$ in direction $q-p$ (i.e., whether, if $\alpha(p)$ were not present, $t$ could attach at $p$ instead).
    If we know which positions $q$ adjacent to $p$ satisfy $p \not\prec q$, this check can be done in time $O(\log |T|)$ with appropriate choice of data structure, implying total time $O(|\alpha| \log |T|)$ over all positions $p \in \dom \alpha$.
    It remains to show how to determine which adjacent positions $p,q \in \dom \alpha$ satisfy $p \prec q$.

    Recall that a \emph{cut vertex} of a connected graph is a vertex whose removal disconnects the graph, and that a subgraph is \emph{biconnected} if the removal of any single vertex from the subgraph leaves it connected.
    Every graph can be decomposed into a tree of biconnected components, with cut vertices connecting different biconnected components (and belonging to all biconnected components that they connect).
    If $p$ is not a cut vertex of the binding graph of $\alpha$, then $\dom \alpha_p$ is simply $\dom \alpha \setminus \{p\}$ because, for all $q \in \dom \alpha \setminus \{p\}$, $p \not\prec q$.
    If $p$ is a cut vertex, then $p \prec q$ if and only if 
    removing $p$ from the binding graph of $\alpha$ places $q$ and the seed position in two different connected components.
    This is because the connected component containing the seed after removing $p$ corresponds precisely to $\alpha_p$.

    Run the linear time Hopcroft-Tarjan algorithm~\cite{HopcroftTarjanBiconnected73} for decomposing the binding graph of $\alpha$ into a tree of its biconnected components, which also identifies which vertices in the graph are cut vertices and which biconnected components they connect.
    Recall that the Hopcroft-Tarjan algorithm is an augmented depth-first search.
    Root the tree with $s$'s biconnected component (i.e., start the depth-first search there), so that each component has a parent component and child components.
    In particular, each cut vertex $p$ has a ``parent'' biconnected component and $k \geq 1$ ``child'' biconnected components.
    Removing such a cut vertex $p$ will separate the graph into $k+1$ connected components: the nodes in the $k$ subtrees and the remaining nodes connected to the parent biconnected component of $p$.
    Thus $p \prec q$ if and only if $p$ is a cut vertex and $q$ is contained in the subtree rooted at $p$.


    This check can be done for all positions $p$ and their $\leq 4$ adjacent positions $q$ in linear time by ``weaving'' the checks into the Hopcroft-Tarjan algorithm.
    As the depth-first search executes, each vertex $p$ is marked as either \emph{unvisited}, \emph{visiting} (meaning the search is currently in a subtree rooted at $p$), or \emph{visited} (meaning the search has visited and exited the subtree rooted at $p$).
    If $p$ is marked as visited or unvisited at the time $q$ is processed, then $q$ is not in the subtree under $p$. 
    If $p$ is marked as visiting when $q$ is processed, then $q$ is in $p$'s subtree. 

    At the time $q$ is visited, it may not yet be known whether $p$ is a cut vertex.
    To account for this, simply run the Hopcroft-Tarjan algorithm twice, doing the checks just described on the second execution, using the cut vertex information obtained on the first execution.
\end{proof}

\begin{thm}\label{thm-alg-temp1-hier}
  There is an algorithm that solves the $\hUPV$ problem in time $O(|\alpha| |\calT| \log |\calT|)$.
\end{thm}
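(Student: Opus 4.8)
The plan is to reuse the reduction of Cannon et al.~\cite{TwoHandsBetterThanOne} from $\hUPV$ to seeded UPV, but with our faster seeded subroutine (Theorem~\ref{thm-alg-temp1-seeded}) in place of the algorithm of~\cite{ACGHKMR02} that they used. Recall the shape of that reduction. At temperature~$1$, a hierarchical assembly is producible in $(T,1)$ exactly when all of its tiles lie in $T$ and its binding graph is connected; equivalently, up to translation the producible assemblies of $(T,1)$ are the union, over all tile types $t \in T$, of the producible assemblies of the \emph{seeded} system $\calT_t := (T,t,1)$ with single seed tile $t$. From this one obtains the finite characterization: $\calT$ uniquely produces $\alpha$ if and only if $\alpha$ is producible and terminal in $(T,1)$ and, for every $t \in T$, every producible assembly of $\calT_t$ has a translation that is a subassembly of $\alpha$; Cannon et al.\ show that this last family of conditions can be decided by $O(|\calT|)$ calls to a seeded UPV routine, one associated with each $t$ and run on a suitable translate of $\alpha$.

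The steps I would then carry out are: (i) check that every tile of $\alpha$ lies in $T$ and, since a lone tile is always producible and hence $\alpha$ must contain a copy of every tile type of $T$ whenever $\calT$ uniquely produces $\alpha$, that every tile type of $T$ does occur in $\alpha$ --- both in $O(|\alpha| \log |\calT|)$ time using a dictionary on $T$ with logarithmic-time lookup; (ii) check that $\alpha$ is producible --- at temperature~$1$, that the binding graph of $\alpha$ is connected, in $O(|\alpha|)$ time (or by invoking the algorithm of Section~\ref{sec-algorithm}) --- and that $\alpha$ is terminal --- at temperature~$1$, that every glue on an unbound side of $\alpha$ is null, in $O(|\alpha|)$ time, using that, there being no functionally null glues, a positive exposed glue would admit a single-tile attachment while conversely nothing can bind with strength $\geq 1$ to an $\alpha$ all of whose exposed glues are null; (iii) for each of the $O(|\calT|)$ relevant tile types $t$, form the corresponding seeded instance, of size $O(|\alpha| + |\calT|) = O(|\alpha|)$, exactly as in~\cite{TwoHandsBetterThanOne}, and run the $\sUPV$ algorithm of Theorem~\ref{thm-alg-temp1-seeded} on it in $O(|\alpha| \log |\calT|)$ time. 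Report ``yes'' iff every check succeeds. Step (iii) dominates: the total is $O(|\calT|) \cdot O(|\alpha| \log |\calT|) = O(|\alpha|\, |\calT| \log |\calT|)$, while steps (i)--(ii) cost only $O(|\alpha| \log |\calT|)$.

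The main obstacle is not the running-time bookkeeping but the correctness of the hierarchical-to-seeded reduction --- in particular, that it uses only $O(|\calT|)$, rather than $O(|\alpha|)$, seeded instances (the delicate point being how to treat a tile type that occurs more than once in $\alpha$, so that a producible assembly of $\calT_t$ is permitted to embed at any occurrence of $t$), and that each instance really is a temperature-$1$ $\sUPV$ instance. I would dispatch this by invoking the reduction of~\cite{TwoHandsBetterThanOne} essentially verbatim; our only new input is the observation that its seeded instances have temperature~$1$, so Theorem~\ref{thm-alg-temp1-seeded} applies and lowers the cost of each from $O(|\alpha|^2 + |\alpha||\calT|)$ to $O(|\alpha| \log |\calT|)$, which propagates to the claimed $O(|\alpha|\, |\calT| \log |\calT|)$ bound and improves the $O(|\alpha|^2 |\calT| + |\alpha||\calT|^2)$ bound of~\cite{TwoHandsBetterThanOne}.
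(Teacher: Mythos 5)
Your proposal is correct and follows essentially the same route as the paper: invoke the characterization of Cannon et al.\ that a temperature-1 hierarchical TAS $(T,1)$ uniquely produces $\alpha$ if and only if each seeded TAS $(T,s,1)$ for $s \in T$ does, and make $|\calT|$ calls to the $\sUPV$ algorithm of Theorem~\ref{thm-alg-temp1-seeded}, giving $O(|\alpha|\,|\calT| \log |\calT|)$ total time. The extra preliminary checks you describe (producibility, terminality, tile-type membership) are subsumed by, or consistent with, the paper's two-line argument and do not change the approach.
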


\begin{proof}
  Cannon, Demaine, Demaine, Eisenstat, Patitz, Schweller, Summers, and Winslow~\cite{TwoHandsBetterThanOne} showed that a temperature 1 hierarchical TAS $\calT=(T,1)$ uniquely produces $\alpha$ if and only if, for each $s \in T$, the seeded TAS $\calT_s = (T,s,1)$ uniquely produces $\alpha$.
  Therefore, the $\hUPV$ problem can be solved by calling the algorithm of Theorem~\ref{thm-alg-temp1-seeded} $|\calT|$ times, resulting in a running time of $O(|\alpha| |\calT| \log |\calT|)$.
\end{proof}

\vspace{-0.2in}
\section{Unions of producible assemblies are producible}
\label{sec-union}

Throughout this section, fix a hierarchical TAS $\calT=(T,\tau)$.
Let $\alpha,\beta$ be assemblies.
We say $\alpha$ and $\beta$ are \emph{consistent} if $\alpha(p) = \beta(p)$ for all points $p \in \dom \alpha \cap \dom \beta$.
If $\alpha$ and $\beta$ are consistent, let $\alpha \cup \beta$ be defined as the assembly $(\alpha\cup\beta)(p) = \alpha(p)$ if $\alpha$ is defined, and $(\alpha\cup\beta)(p)=\beta(p)$ if $\alpha(p)$ is undefined.
If $\alpha$ and $\beta$ are not consistent, let $\alpha \cup \beta$ be undefined.

\begin{thm} \label{thm-union-producible}
  If $\alpha,\beta$ are producible assemblies that are consistent and $\dom\alpha \cap \dom\beta \neq \emptyset$, then $\alpha \cup \beta$ is producible.
  Furthermore, $\alpha \to \alpha \cup \beta$, i.e., it is possible to assemble exactly $\alpha$, then to assemble the missing portions of $\beta$.
\end{thm}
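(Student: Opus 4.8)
The plan is to induct on the structure of an assembly tree for $\beta$, peeling off leaves (single tiles) one at a time and showing that each tile of $\beta$ that is not already present can be stably attached to the current assembly. More precisely, I would first fix an assembly tree $\Upsilon_\beta$ for $\beta$. The key observation is that since $\alpha$ and $\beta$ are consistent and overlap, the union $\alpha \cup \beta$ has $\alpha$ as a subassembly, and everything ``new'' in $\beta$ (i.e., $\dom\beta \setminus \dom\alpha$) needs to be glued on starting from $\alpha$. The subtlety the theorem statement itself flags is that $\beta$'s assembly tree might build $\alpha \cap \beta$ in an order incompatible with how $\alpha$ arose, so I cannot simply ``replay'' $\Upsilon_\beta$; instead I want to argue that order doesn't matter by a greedy/monotonicity argument analogous to the proof of Theorem in Section~\ref{sec-algorithm}.

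Concretely, I would run the following process. Start with the assembly $\gamma_0 = \alpha$ (translated to sit inside $\alpha \cup \beta$). Maintain a collection $\calC$ consisting of $\gamma_i$ together with the singleton tiles at each position of $\dom\beta \setminus \dom\gamma_i$; this is a partition of $\dom(\alpha\cup\beta)$. As long as $\gamma_i \neq \alpha \cup \beta$, I claim some tile adjacent to $\gamma_i$ in $\dom\beta$ can be attached with strength $\geq \tau$, so set $\gamma_{i+1}$ to be $\gamma_i$ with that tile added. To prove the claim, apply Lemma~\ref{lem-partition-tree-partition} with $S = \dom\beta$, the hierarchical division being $\Upsilon_\beta$, and the partition $\calC|_{\dom\beta}$ (the restriction of $\calC$ to $\dom\beta$, which is still a partition of $\dom\beta$ and is not $\{\dom\beta\}$ as long as $\gamma_i$ doesn't already contain all of $\beta$): there are sibling nodes $C_1', C_2'$ in $\Upsilon_\beta$ lying in distinct blocks, so the glues between $C_1'$ and $C_2'$ sum to at least $\tau$. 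One of these blocks contains the part of $\gamma_i$ lying in $\dom\beta$ — wait, not necessarily, since both blocks could be singleton tiles outside $\gamma_i$; but in that case I attach those two tiles to each other first and continue. I need to be a bit careful here: the cleanest formulation is to let $\calC$ be an arbitrary collection of producible subassemblies partitioning $\dom(\alpha\cup\beta)$ with $\alpha$ (or a superassembly of it) as one block, and show by Lemma~\ref{lem-partition-tree-partition} plus monotonicity that two blocks can always merge, terminating at $\alpha \cup \beta$. Since $\alpha$ is the block we never break, and it only grows, the "furthermore" clause — $\alpha \to \alpha\cup\beta$ — follows once I additionally insist that every merge involves the block containing $\alpha$; I should check that such a merge is always available, which again is exactly Lemma~\ref{lem-partition-tree-partition} applied with the $\alpha$-block as one of the blocks, forcing the witnessing siblings to straddle it appropriately, or else reducing to merging two tiles into a larger chunk of $\beta$ first. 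Actually the straddle isn't guaranteed, so the honest statement is: first merge $\beta$-only pieces freely until each remaining $\beta$-only block is adjacent to the $\alpha$-block, then absorb them one at a time; termination and correctness of each step both come from the Lemma and monotonicity.

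The main obstacle I anticipate is precisely this bookkeeping around the "furthermore" clause — ensuring that we can always route growth through the assembly containing $\alpha$, rather than merely producing $\alpha\cup\beta$ by some unrelated tree. The resolution is to observe that producibility of $\alpha\cup\beta$ via any tree implies, by an argument like the one already used for \textsc{Is-Producible-Assembly}, that the greedy process which always merges with the $\alpha$-block when possible cannot get stuck: Lemma~\ref{lem-partition-tree-partition}, applied to the assembly tree of $\alpha\cup\beta$ and the current partition, always yields a legal merge, and a short case analysis shows we can take it to involve the $\alpha$-block (or first do finitely many $\beta$-internal merges that strictly decrease the number of $\beta$-only blocks, after which a merge with the $\alpha$-block becomes available). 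Monotonicity guarantees that the glue strength available to the sibling subsets is still available between the larger current blocks. I would also need the preliminary fact that $\alpha\cup\beta$ is itself producible (to have a tree to feed into the Lemma), which I would get by the same greedy merging argument run without the "route through $\alpha$" constraint, justified directly by Lemma~\ref{lem-partition-tree-partition} applied to $\Upsilon_\beta$ as above.
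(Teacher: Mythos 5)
Your core argument is sound, but it is a genuinely different proof from the one in the paper. The paper does not use Lemma~\ref{lem-partition-tree-partition} here at all; it performs surgery directly on the two assembly trees: it grafts $\Ua$ into $\Ub$ at a leaf representing a shared position of $\dom\alpha\cap\dom\beta$, and then repeatedly eliminates each remaining duplicated leaf by a local rewiring around the least common ancestor of the two copies, with monotonicity guaranteeing that every rewired attachment remains stable. Your route instead re-runs the greedy-merging argument of Section~\ref{sec-algorithm}: seed a partition of $\dom(\alpha\cup\beta)$ with the block $\alpha$ plus singletons for $\dom\beta\setminus\dom\alpha$, restrict the current partition to $\dom\beta$ (which stays a genuine partition of $\dom\beta$ because the $\alpha$-block meets $\dom\beta$ and every other block lies inside it), and apply Lemma~\ref{lem-partition-tree-partition} to $\Ub$ to exhibit two distinct blocks containing sibling nodes of $\Ub$; consistency of $\alpha$ and $\beta$ guarantees the witnessing glues are actually present in $\alpha\cup\beta$, and monotonicity lifts the strength-$\geq\tau$ bond to the full blocks. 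Each merge joins two producible, nonoverlapping, stable blocks, so the process terminates at a producible $\alpha\cup\beta$. This is correct and arguably tidier than the paper's surgery (no ancestor bookkeeping), at the cost of importing the Section~\ref{sec-algorithm} machinery; the paper's proof is self-contained given the two trees.

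The one place you tangle yourself is the ``furthermore'' clause, and your eventual fix (``merge $\beta$-only pieces until each remaining block is adjacent to the $\alpha$-block, then absorb them one at a time'') is not justified as stated: adjacency does not imply attachability with strength $\tau$, and nothing guarantees the free merges reach such a configuration. But no rerouting through the $\alpha$-block is needed. The relation $\to$ is defined by descendancy in an assembly tree, and your greedy process already yields an assembly tree for $\alpha\cup\beta$ once you graft any assembly tree of $\alpha$ below the node representing the initial $\alpha$-block. In that tree $\alpha$ is a node, and the path from it to the root is precisely a sequence of attachments of producible assemblies to a growing superassembly of $\alpha$, which is all that $\alpha\to\alpha\cup\beta$ requires. (The paper obtains the same conclusion by observing that its surgery never alters the grafted copy of $\Ua$.) With that observation substituted for your case analysis, the proposal is complete.
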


\begin{proof}
  If $\alpha$ and $\beta$ are consistent and have non-empty overlap, then $\alpha \cup \beta$ is necessarily stable, since every cut of $\alpha \cup \beta$ is a superset of some cut of either $\alpha$ or $\beta$, which are themselves stable.

    \begin{figure}[htb]
    \centering
      \subfigure
        [{\figuresize First operation to combine the assembly trees for $\alpha$ and $\beta$.  $l_1$ and $l_2$ are two leaves representing the same position in $\dom \alpha \cap \dom \beta$.}]{
        \includegraphics[width=3in]{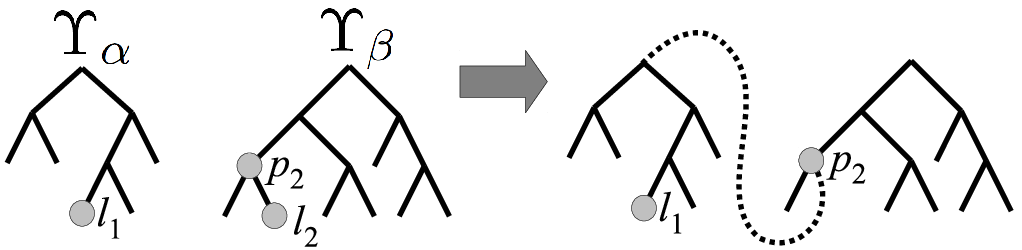}
        \label{fig:tree-surgery-first-step}
      }
      \hfill
      \subfigure
        [{\figuresize Operation to eliminate one of two leaves $l_1$ and $l_2$ representing the same tile in the tree while preserving that all attachments are stable.}]{
        \includegraphics[width=3in]{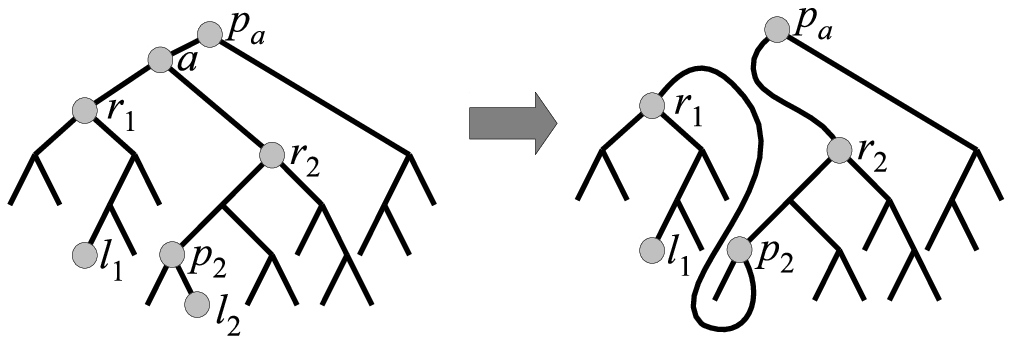}
        \label{fig:tree-surgery}
      }
      \label{fig:tree-surgery-whole}
      \caption{Procedure to construct assembly tree for $\alpha \cup \beta$ from separate assembly trees for $\alpha$ and $\beta$.}
    \end{figure}

  Let $\Ua$ and $\Ub$ be assembly trees for $\alpha$ and $\beta$, respectively.
  Define an assembly tree $\U$ for $\alpha \cup \beta$ by the following construction.
  Let $l_1$ be a leaf in $\Ua$ and let $l_2$ be a leaf in $\Ub$ representing the same position $x \in \dom\alpha \cap \dom\beta$, as shown in Figure~\ref{fig:tree-surgery-first-step}.
  Remove $l_2$ and replace it with the entire tree $\Ua$.
  Call the resulting tree $\U'$.
  At this point, $\U'$ is not an assembly tree if $\alpha$ and $\beta$ overlapped on more than one point, because every position in $\dom\alpha \cap \dom\beta \setminus \{x\}$ has duplicated leaves.
  Therefore the tree $\U'$ is not a hierarchical division of the set $\dom \alpha \cup \dom \beta$, since not all unions represented by each internal node are disjoint unions.
  However, each such union does represent a stable assembly.
  We will show how to modify $\U'$ to eliminate each of these duplicates, while maintaining the invariant that each internal node represents a stable attachment, and we call the resulting assembly tree $\U$.
  Furthermore, the subtree $\Ua$ that was placed under $p_2$ will not change as a result of these modifications, which implies $\alpha \to \alpha \cup \beta$.


  The process to eliminate one pair of duplicate leaves is shown in Figure \ref{fig:tree-surgery}.
  Let $l_1$ and $l_2$ be two leaves representing the same point in $\dom\alpha \cap \dom\beta$, and let $a$ be their least common ancestor in $\U$, noting that $a$ is not contained in $\Ua$ since $l_2$ is not contained in $\Ua$.
  Let $p_a$ be the parent of $a$.
  Let $r_1$ be the root of the subtree under $a$ containing $l_1$.
  Let $r_2$ be the root of the subtree under $a$ containing $l_2$.
  Let $p_2$ be the parent of $l_2$.
  Remove the leaf $l_2$ and the node $a$.
  Set the parent of $r_1$ to be $p_2$.
  Set the parent of $r_2$ to be $p_a$.

  Since we have replaced the leaf $l_2$ with a subtree containing the leaf $l_1$, the subtree rooted at $r_1$ is an assembly containing the tile represented by $l_2$, in the same position.
  Since the original attachment of $l_2$ to its sibling was stable, by monotonicity, the attachment represented by $p_2$ is still legal.
  The removal of $a$ is simply to maintain that $\Upsilon$ is a full binary tree; leaving it would mean that it represents a superfluous ``attachment'' of the assembly $r_2$ to $\emptyset$.
  However, it is now legal for $r_2$ to be a direct child of $p_a$, since $r_2$ (due to the insertion of the entire $r_1$ subtree beneath a descendant of $r_2$, again by monotonicity) now has all the tiles necessary for its attachment to the old sibling of $a$ to be stable.
  Since $a$ was not contained in $\Ua$, the subtree $\Ua$ has not been altered.

  This process is iterated for all duplicate leaves.
  When all duplicates have been removed, $\Upsilon$ is a valid assembly tree with root $\alpha \cup \beta$. 
  Since $\Upsilon$ contains $\Ua$ as a subtree, $\alpha \to \alpha \cup \beta$.
\end{proof}

\vspace{-0.2in}
\section{Conclusion}
\label{sec-conclusion}

Theorem~\ref{thm-union-producible} shows that if assemblies $\alpha$ and $\beta$ overlap consistently, then $\alpha \cup \beta$ is producible.
What if $\alpha=\beta$?
Suppose we have three copies of $\alpha$, and label them each uniquely as $\alpha_1,\alpha_2,\alpha_3$.
Suppose further than $\alpha_2$ overlaps consistently with $\alpha_1$ when translated by some non-zero vector $\vv$.
Then we know that $\alpha_1 \cup \alpha_2$ is producible.
Suppose that $\alpha_3$ is $\alpha_2$ translated by $\vv$, or equivalently it is $\alpha_1$ translated by $2 \vv$.
Then $\alpha_2 \cup \alpha_3$ is producible, since this is merely a translated copy of $\alpha_1 \cup \alpha_2$.
It seems intuitively that $\alpha_1 \cup \alpha_2 \cup \alpha_3$ should be producible as well.
However, while $\alpha_1$ overlaps consistently with $\alpha_2$, and $\alpha_2$ overlaps consistently with $\alpha_3$, it could be the case that $\alpha_3$ intersects $\alpha_1$ inconsistently, i.e., they share a position but put a different tile type at that position.
In this case $\alpha_1 \cup \alpha_2 \cup \alpha_3$ is undefined.
See Figure~\ref{fig:pump} for an example.

\begin{figure}[htb]
\begin{center}
  \includegraphics[width=6in]{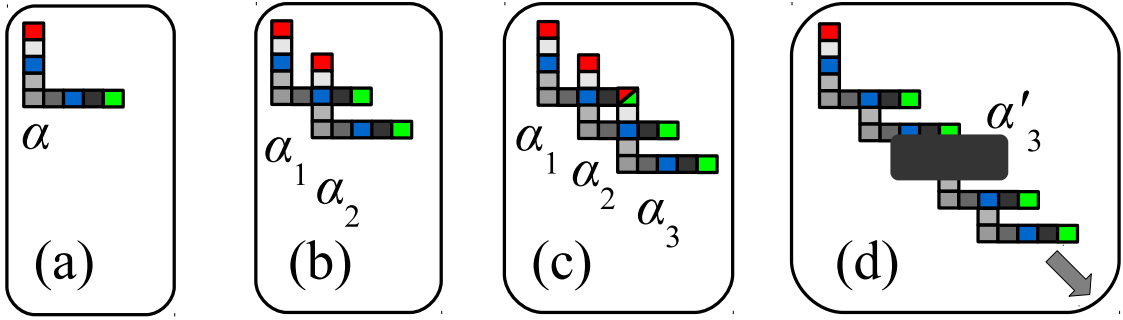}
  \caption{\label{fig:pump} \figuresize
  (a) A producible assembly $\alpha$. Gray tiles are all distinct types from each other, but red, green, and blue each represent one of three different tile types, so the two blue tiles are the same type.
  (b) By Theorem~\ref{thm-union-producible}, $\alpha_1 \cup \alpha_2$ is producible, where $\alpha_1 = \alpha$ and $\alpha_2 = \alpha_1 + (2,-2)$, because they overlap in only one position, and they both have the blue tile type there.
  (c) $\alpha_1$ and $\alpha_3$ both have a tile at the same position, but the types are different (red in the case of $\alpha_1$ and green in the case of $\alpha_3$).
  (d) However, a subassembly $\alpha'_i$ of each new $\alpha_i$ can grow, enough to allow the translated equivalent subassembly $\alpha'_{i+1}$ of $\alpha_{i+1}$ to grow from $\alpha'_i$, so an infinite structure is producible.
  }
\end{center}
\end{figure}

However, in this case, although $\alpha_1 \cup \alpha_2 \cup \alpha_3$ is not producible (in fact, not even defined), ``enough'' of $\alpha_3$ (say, $\alpha'_3 \sqsubset \alpha_3$) can grow off of $\alpha_1 \cup \alpha_2$ to allow a fourth copy $\alpha'_4$ to begin to grow to an assembly to which a fifth copy $\alpha'_5$ can attach, etc., so that an infinite assembly can grow by ``pumping'' additional copies of $\alpha'_3$.
Is this always possible?
In other words, is it the case that if $\alpha$ is a producible assembly of a hierarchical TAS $\calT$, and $\alpha$ overlaps consistently with some non-zero translation of itself, then $\calT$ necessarily produces an infinite assembly?
If true, this would imply that no hierarchical TAS producing such an assembly could be uniquely produce a finite shape.
This would settle an open question posed by Chen and Doty~\cite{CheDot12}, who showed that as long as a hierarchical TAS does not produce assemblies that consistently overlap any translation of themselves, then the TAS cannot uniquely produce any shape in time sublinear in its diameter.

\paragraph{Acknowledgements.}
The author is very grateful to Ho-Lin Chen, David Soloveichik, Damien Woods, Matt Patitz, Scott Summers, Robbie Schweller, J\'{a}n Ma{\v{n}}uch, and Ladislav Stacho for many insightful discussions.

\newpage
\bibliographystyle{plain}
\bibliography{tam}
\newpage

\newpage
\appendix
\section{Formal definition of the abstract tile assembly model}
\label{sec-tam-formal}

\newcommand{\fullgridgraph}{G^\mathrm{f}}
\newcommand{\bindinggraph}{G^\mathrm{b}}

This section gives a terse definition of the abstract Tile Assembly Model (aTAM,~\cite{Winf98}). This is not a tutorial; for readers unfamiliar with the aTAM,  \cite{RotWin00} gives an excellent introduction to the model.

Fix an alphabet $\Sigma$. $\Sigma^*$ is the set of finite strings over $\Sigma$.
Given a discrete object $O$, $\langle O \rangle$ denotes a standard encoding of $O$ as an element of $\Sigma^*$.
$\Z$, $\Z^+$, and $\N$ denote the set of integers, positive integers, and nonnegative integers, respectively.
For a set $A$, $\calP(A)$ denotes the power set of $A$.
Given $A \subseteq \Z^2$, the \emph{full grid graph} of $A$ is the undirected graph $\fullgridgraph_A=(V,E)$, where $V=A$, and for all $u,v\in V$, $\{u,v\} \in E \iff \| u-v\|_2 = 1$; i.e., if and only if $u$ and $v$ are adjacent on the integer Cartesian plane.
A \emph{shape} is a set $S \subseteq \Z^2$ such that $\fullgridgraph_S$ is connected.

A \emph{tile type} is a tuple $t \in (\Sigma^* \times \N)^4$; i.e., a unit square with four sides listed in some standardized order, each side having a \emph{glue label} (a.k.a. \emph{glue}) $\ell \in \Sigma^*$ and a nonnegative integer \emph{strength}, denoted $str(\ell)$.
For a set of tile types $T$, let $\Lambda(T) \subset \Sigma^*$ denote the set of all glue labels of tile types in $T$.
If a glue has strength 0, we say it is \emph{null}, and if a positive-strength glue facing some direction does not appear on some tile type in the opposite direction, we say it is \emph{functionally null}.
We assume that all tile sets in this paper contain no functionally null glues.\footnote{This assumption does not affect the results of this paper.
    It is irrelevant for Theorem~\ref{thm-union-producible} or the correctness of the algorithms in the other theorems.
    It also does not affect the running time results for algorithms taking a TAS as input, because we can preprocess $T$ in linear time to find and set to null any functionally null glues.
    The number of glues is $O(|T|)$, and we assume that each glue from glue set $G$ is an integer in the set $\{0,\ldots,|G|-1\}$.
    We can use a Boolean array of size $|G|$ to determine in time $O(|T|)$ which glues appear on the north that do not appear on the south of some tile type.
    Repeat this for each of the remaining three directions.
    Then replace all functionally null glues in $T$ with null glues, which takes time $O(|T|)$.
    To do this replacement in an assembly $\alpha$ takes time $O(|\alpha|)$.}
Let $\dall$ denote the \emph{directions} consisting of unit vectors $\{(0,1), (0,-1), (1,0), (-1,0)\}$.
Given a tile type $t$ and a direction $d \in \dall$, $t(d) \in \Lambda(T)$ denotes the glue label on $t$ in direction $d$.
We assume a finite set $T$ of tile types, but an infinite number of copies of each tile type, each copy referred to as a \emph{tile}. An \emph{assembly}
is a nonempty connected arrangement of tiles on the integer lattice $\Z^2$, i.e., a partial function $\alpha:\Z^2 \dashrightarrow T$ such that $\fullgridgraph_{\dom \alpha}$ is connected and $\dom \alpha \neq \emptyset$.
The \emph{shape of $\alpha$} is $\dom \alpha$.
Write $|\alpha|$ to denote $|\dom\alpha|$.
Given two assemblies $\alpha,\beta:\Z^2 \dashrightarrow T$, we say $\alpha$ is a \emph{subassembly} of $\beta$, and we write $\alpha \sqsubseteq \beta$, if $\dom \alpha \subseteq \dom \beta$ and, for all points $p \in \dom \alpha$, $\alpha(p) = \beta(p)$.

Given two assemblies $\alpha$ and $\beta$, we say $\alpha$ and $\beta$ are \emph{equivalent up to translation}, written $\alpha \simeq \beta$, if there is a vector $\vec{x} \in \Z^2$ such that $\dom\alpha = \dom\beta + \vec{x}$ (where for $A \subseteq \Z^2$, $A + \vec{x}$ is defined to be $\setr{p + \vec{x}}{p \in A}$) and for all $p \in \dom\beta$, $\alpha(p + \vec{x}) = \beta(p)$.
In this case we say that $\beta$ is a \emph{translation} of $\alpha$.
We have fixed assemblies at certain positions on $\Z^2$ only for mathematical convenience in some contexts, but of course real assemblies float freely in solution and do not have a fixed position.

Let $\alpha$ be an assembly and let $p\in\dom\alpha$ and $d\in\dall$ such that $p + d \in \dom\alpha$.
Let $t=\alpha(p)$ and $t' = \alpha(p+d)$.
We say that the tiles $t$ and $t'$ at positions $p$ and $p+d$ \emph{interact} if $t(d) = t'(-d)$ and $str(t(d)) > 0$, i.e., if the glue labels on their abutting sides are equal and have positive strength.
Each assembly $\alpha$ induces a \emph{binding graph} $\bindinggraph_\alpha$, a grid graph $G=(V_\alpha,E_\alpha)$, where $V_\alpha=\dom\alpha$, and $\{p_1,p_2\} \in E_\alpha \iff \alpha(p_1) \text{ interacts with } \alpha(p_2)$.\footnote{For $\fullgridgraph_{\dom \alpha}=(V_{\dom \alpha},E_{\dom \alpha})$ and $\bindinggraph_\alpha=(V_\alpha,E_\alpha)$, $\bindinggraph_\alpha$ is a spanning subgraph of $\fullgridgraph_{\dom \alpha}$: $V_\alpha = V_{\dom \alpha}$ and $E_\alpha \subseteq E_{\dom \alpha}$.}
Given $\tau\in\Z^+$, $\alpha$ is \emph{$\tau$-stable} if every cut of $\bindinggraph_\alpha$ has weight at least $\tau$, where the weight of an edge is the strength of the glue it represents.
That is, $\alpha$ is $\tau$-stable if at least energy $\tau$ is required to separate $\alpha$ into two parts.
When $\tau$ is clear from context, we say $\alpha$ is \emph{stable}.

\opt{submission}{\paragraph{Seeded aTAM.}}
\opt{normal}{\subsection{Seeded aTAM}}
A \emph{seeded tile assembly system} (seeded TAS) is a triple $\calT = (T,\sigma,\tau)$, where $T$ is a finite set of tile types, $\sigma:\Z^2 \dashrightarrow T$ is the finite, $\tau$-stable \emph{seed assembly},
and $\tau\in\Z^+$ is the \emph{temperature}.
Let $|\calT|$ denote $|T|$.
If $\calT$ has a single seed tile $s\in T$ (i.e., $\sigma(0,0)=s$ for some $s \in T$ and is undefined elsewhere), then we write $\calT=(T,s,\tau).$
Given two $\tau$-stable assemblies $\alpha,\beta:\Z^2 \dashrightarrow T$, we write $\alpha \to_1^{\calT} \beta$ if $\alpha \sqsubseteq \beta$ and $|\dom \beta \setminus \dom \alpha| = 1$. In this case we say $\alpha$ \emph{$\calT$-produces $\beta$ in one step}.\footnote{Intuitively $\alpha \to_1^\calT \beta$ means that $\alpha$ can grow into $\beta$ by the addition of a single tile; the fact that we require both $\alpha$ and $\beta$ to be $\tau$-stable implies in particular that the new tile is able to bind to $\alpha$ with strength at least $\tau$. It is easy to check that had we instead required only $\alpha$ to be $\tau$-stable, and required that the cut of $\beta$ separating $\alpha$ from the new tile has strength at least $\tau$, then this implies that $\beta$ is also $\tau$-stable.}
If $\alpha \to_1^{\calT} \beta$, $ \dom \beta \setminus \dom \alpha=\{p\}$, and $t=\beta(p)$, we write $\beta = \alpha + (p \mapsto t)$.

A sequence of $k\in\Z^+$ 
assemblies $\vec{\alpha} = (\alpha_0,\alpha_1,\ldots,\alpha_{k-1})$ is a \emph{$\calT$-assembly sequence} if, for all $1 \leq i < k$, $\alpha_{i-1} \to_1^\calT \alpha_{i}$.
We write $\alpha \to^\calT \beta$, and we say $\alpha$ \emph{$\calT$-produces} $\beta$ (in 0 or more steps) if there is a $\calT$-assembly sequence $\vec{\alpha}=(\alpha,\alpha_1,\alpha_2,\ldots,\alpha_{k-1} = \beta)$ of length $k = |\dom \beta \setminus \dom \alpha| + 1$. 
We say $\alpha$ is \emph{$\calT$-producible} if $\sigma \to^\calT \alpha$, and we write $\prodasm{\calT}$ to denote the set of $\calT$-producible assemblies.
The relation $\to^\calT$ is a partial order on $\prodasm{\calT}$ \cite{Roth01,jSSADST}. 

An assembly $\alpha$ is \emph{$\calT$-terminal} if $\alpha$ is $\tau$-stable and $\partial^\calT \alpha=\emptyset$.
We write $\termasm{\calT} \subseteq \prodasm{\calT}$ to denote the set of $\calT$-producible, $\calT$-terminal assemblies.

A seeded TAS $\calT$ is \emph{directed (a.k.a., deterministic, confluent)} if the poset $(\prodasm{\calT}, \to^\calT)$ is directed; i.e., if for each $\alpha,\beta \in \prodasm{\calT}$, there exists $\gamma\in\prodasm{\calT}$ such that $\alpha \to^\calT \gamma$ and $\beta \to^\calT \gamma$.\footnote{The following two convenient characterizations of ``directed'' are routine to verify.
$\calT$ is directed if and only if $|\termasm{\calT}| = 1$.
$\calT$ is \emph{not} directed if and only if there exist $\alpha,\beta\in\prodasm{\calT}$ and $p \in \dom \alpha \cap \dom \beta$ such that $\alpha(p) \neq \beta(p)$.}
We say that $\calT$ \emph{uniquely produces} $\alpha$ if $\termasm{\calT} = \{\alpha\}$.

\opt{submission}{\paragraph{Hierarchical aTAM.}}
\opt{normal}{\subsection{Hierarchical aTAM}}
A \emph{hierarchical tile assembly system} (hierarchical TAS) is a pair $\calT = (T,\tau)$, where $T$ is a finite set of tile types,
and $\tau\in\Z^+$ is the \emph{temperature}.
Let $\alpha,\beta:\Z^2 \dashrightarrow T$ be two assemblies.
Say that $\alpha$ and $\beta$ are \emph{nonoverlapping} if $\dom\alpha \cap \dom\beta = \emptyset$.
If $\alpha$ and $\beta$ are nonoverlapping assemblies, define $\alpha \cup \beta$ to be the assembly $\gamma$ defined by $\gamma(p) = \alpha(p)$ for all $p\in\dom\alpha$, $\gamma(p) = \beta(p)$ for all $p \in \dom \beta$, and $\gamma(p)$ is undefined for all $p \in \Z^2 \setminus (\dom\alpha \cup \dom\beta)$.
An assembly $\gamma$ is \emph{singular} if $\gamma(p)=t$ for some $p\in\Z^2$ and some $t \in T$ and $\gamma(p')$ is undefined for all $p' \in \Z^2 \setminus \{p\}$.
Given a hierarchical TAS $\calT=(T,\tau)$, an assembly $\gamma$ is \emph{$\calT$-producible} if either 1) $\gamma$ is singular, or 2) there exist producible nonoverlapping assemblies $\alpha$ and $\beta$ such that $\gamma = \alpha\cup\beta$ and $\gamma$ is $\tau$-stable. 
In the latter case, write $\alpha + \beta \to_1^\calT \gamma$.
An assembly $\alpha$ is \emph{$\calT$-terminal} if for every producible assembly $\beta$ such that $\alpha$ and $\beta$ are nonoverlapping, $\alpha \cup \beta$ is not $\tau$-stable.\footnote{The restriction on overlap is a model of a chemical phenomenon known as \emph{steric hindrance}~\cite[Section 5.11]{WadeOrganicChemistry91} or, particularly when employed as a design tool for intentional prevention of unwanted binding in synthesized molecules, \emph{steric protection}~\cite{HellerPugh1,HellerPugh2,GotEtAl00}.}
Define $\prodasm{\calT}$ to be the set of all $\calT$-producible assemblies.
Define $\termasm{\calT} \subseteq \prodasm{\calT}$ to be the set of all $\calT$-producible, $\calT$-terminal assemblies.
A hierarchical TAS $\calT$ is \emph{directed (a.k.a., deterministic, confluent)} if $|\termasm{\calT}|=1$. 
We say that $\calT$ \emph{uniquely produces} $\alpha$ if $\termasm{\calT} = \{\alpha\}$.

Let $\calT$ be a hierarchical TAS, and let $\widehat{\alpha} \in \prodasm{\calT}$ be a $\calT$-producible assembly.
An \emph{assembly tree} $\Upsilon$ of $\widehat{\alpha}$ is a full binary tree with $|\widehat{\alpha}|$ leaves, whose nodes are labeled by $\calT$-producible assemblies, with $\widehat{\alpha}$ labeling the root, singular assemblies labeling the leaves, and node $u$ labeled with $\gamma$ having children $u_1$ labeled with $\alpha$ and $u_2$ labeled with $\beta$, with the requirement that $\alpha + \beta \to_1^\calT \gamma$.
That is, $\Upsilon$ represents one possible pathway through which $\widehat{\alpha}$ could be produced from individual tile types in $\calT$.
Let $\Upsilon(\calT)$ denote the set of all assembly trees of $\calT$.
If $\alpha$ is a descendant node of $\beta$ in an assembly tree of $\calT$, write $\alpha \to^\calT \beta$.
Say that an assembly tree is \emph{$\calT$-terminal} if its root is a $\calT$-terminal assembly.
Let $\Upsilon_\Box(\calT)$ denote the set of all $\calT$-terminal assembly trees of $\calT$.
Note that even a directed hierarchical TAS can have multiple terminal assembly trees that all have the same root terminal assembly.

When $\calT$ is clear from context, we may omit $\calT$ from the notation above and instead write
$\to_1$,
$\to$,
$\partial \alpha$,
\emph{assembly sequence},
\emph{produces},
\emph{producible}, and
\emph{terminal}.


\end{document}